\pgfplotsset{compat=newest}
\pgfplotsset{plot coordinates/math parser=false}
\newlength\figureheight
\newlength\figurewidth
\begin{document}

\title{Guaranteed Evader Detection in Multi-Agent
Search Tasks using Pincer Trajectories}


\author{Roee M. Francos
        and Alfred M. Bruckstein
\thanks{Roee M. Francos and Alfred M. Bruckstein are with the Faculty
of Computer Science, Technion- Israel Institute of Technology, Haifa, Israel, 320003, Emails:(roee.francos@cs.technion.ac.il,
  alfred.bruckstein@cs.technion.ac.il).}
}

\maketitle
\begin{abstract}
Assume that inside an initial planar area there are smart mobile evaders attempting to avoid detection by a team of sweeping searching agents. All sweepers detect evaders with fan-shaped sensors, modeling the field of view of real cameras. Detection of all evaders is guaranteed with cooperative sweeping strategies, by setting requirements on sweepers' speed, and by carefully designing their trajectories. Assume the smart evaders have an upper limit on their speed which is a-priori known to the sweeping team. An easier task for the team of sweepers is to confine evaders to the domain in which they are initially located. The sweepers accomplish the confinement task if they move sufficiently fast and detect evaders by applying an appropriate search strategy. Any given search strategy results in a minimal sweeper's speed in order to be able to detect all evaders. The minimal speed guarantees the ability of the sweeping team to confine evaders to their original domain, and if the sweepers move faster they are able to  detect all evaders that are present in the region. 
We present results on the total search time for a novel pincer-movement based search protocol that utilizes complementary trajectories along with adaptive sensor geometries for any even number of pursuers.
\end{abstract}


\IEEEpeerreviewmaketitle

\section{Introduction}
\textbf{Motivation}. The goal of this research is to provide efficient ``must-win" search strategies for a team of $n$ identical sweeping agents that must guarantee detection of an unknown number of smart evaders that are initially located inside a given circular region of radius $R_0$ while minimizing the search time. The evaders move and try to escape the initial region and have maximal speed of $V_T$, known to the sweepers, and do not have any turning constraints.

A smart evader is one that detects and responds to the motions of searchers by performing optimal evasive maneuvers, to avoid interception. A smart evader is assumed to have full knowledge of the search strategy of the sweeping team. Possessing such knowledge enables smart evaders to plan their movements in a way that maximizes the time required for the pursuing sweepers to detect them. Guaranteed detection of all evaders implies that for all particular choices of escape strategies smart evaders may implement in response to the motions of searches, they will all eventually be detected. 

Agents gather information only from their sensors, and evaders in a sweeper's field of view are immediately detected. There can be many evaders, and those can be located at any point in the interior of the circular region at the beginning of the search process. Importantly, in contradiction to most recent literature on pursuit-evasion problems, the sweepers do not have any information regarding evaders locations' that are outside of their sensing range, nor on the total number of evaders they must detect. All sweepers move at a speed $V_s > V_T$ (measured at the center of the sensor that represents a sweeper's center of mass) and detect evaders using fan shaped sensors with a given half-angle denoted by $\alpha$ and a length of $2r$. Finding an efficient algorithm requires that, throughout the sweep, the footprint of the sweepers’ sensors maximally overlaps the evader region (the region where evaders may possibly be) in order to detect as many evaders as possible. 

The sweeping team consists of an even number of agents, referred to as sweepers, that act as sensors and sweep the region until all evaders are detected. The search is done by pairs of agents sweeping toward each other for the purpose of entrapping all evaders. Cooperation among agents scanning adjacent sectors and sweep toward each other enables them to entrap all evaders, regardless of evaders' adversarial motions.

Since the sweepers do not have any additional knowledge about the evaders whereabouts, or even if all evaders were found at some intermediate point of time during the search, the search is continued until the whole region is searched. 

Additional information such as the number of evaders and their exact locations provided to the sweeper team in advance, could reduce the termination time by utilizing this knowledge and organizing the entire search process differently. However this is not the focus of this paper. Therefore, the resulting search times for a circular environment can be seen as an upper bound on the search time, resulting from the lack of specific information about evaders locations.



We chose to analyze the performance of the system with a fan-shaped sensor as this type of sensor is highly common in many sensing and scanning applications, from optical to radar and sonar. Fan-shaped sensors with a variable half-angle resemble actual pinhole camera visual sensors of a given aperture. Furthermore, a fan shaped sensor with a larger area can reduce the critical speed and sweep time compared to approaches that use linear sensors while avoiding the usage of unrealistic sensors such as circular sensors which assume to detect evaders in all angles around a searcher.

The considered protocol may be implemented in a $2D$ environment in which the actual agents travel on a plane or as a $3$ dimensional search where the sweepers are drone-like agents which fly over the evader area at different heights.


\textbf{Guaranteed evader detection in practical robotic applications.} A wide range of real-world tasks that are nowadays carried out by human-controlled machines are expected to be replaced by partially autonomously operated robots in the nearby future. Search and rescue missions, airborne surveillance applications, various monitoring tasks for security applications, wildlife tracking, fire control as well as inspection tasks in hazardous zones can all benefit from the theoretical and experimental results developed in this work. The combination of the proposed search protocol and sensor choice enables nearly optimal cooperation between agents and allows the deployment of multi-robot teams with superior performance. 

For the mentioned applications, guaranteeing success in the worst-case scenario ensures succeeding in the task for all other simpler scenarios as well. This approach is often used in real-world settings where full state information is not available and performance guarantees must be kept. 


The searching agents considered in this work do not assume knowledge of the number of evaders present in the region, their locations, or their escape plan and despite that they are able to detect all of them. Therefore, this work is of prime theoretical and practical importance as is in many pursuit-evasion games the searching team does not have complete information about its opposing team, as is often assumed by many previous papers. 

Since multi-agent pursuit-evasion search protocols mainly utilize multi-agent UAVs, sweepers fly over the environment containing the evaders, therefore investigating issues such as obstacles is not the main focus of the work, because the sweeping team flies over them. Obstacles limit the movements and locations of ground-moving evaders, and therefore their presence assists the searching team to detect them since it limits the escape options of evaders, and thus does not impact our “worst-case” analysis. 

The mentioned protocols can use a vast suite of onboard sensors to detect evaders, depending on the domain of application. Potential choices vary from visual sensors such as cameras which have the benefit of having a high resolution and being lightweight. Therefore, detecting evaders with cameras requires a smaller battery in order to accomplish the desired task compared to other sensing modalities such as radars that increase the weight of the payload and hence limit the duration of the search mission due to increased energy consumption. Actual detection of evaders can utilize a vast number of computer-vision detection algorithms such as \cite{sandler2018mobilenetv2, bochkovskiy2020yolov4}. 

Since to our use case, the preferred choice for the detection modality is a camera, we extend and generalize previous works on guaranteed detection of smart targets to accommodate usage of such sensors. Previous works such as used circular sensors and linear sensors. However, since both circular and linear sensors offer simplistic assumptions about the area detected by actual searching robot teams, in this work we extend and generalize state-of-the art results on guaranteed detection of smart evaders that use pincer sweeps between searching pairs to search teams that use sensors modelling actual visual detectors. Our obtained results are insensitive to locations of evaders or their numbers. The proposed protocols can be applied in other convex environments as well, by using slight modifications to the explored sweeping strategies.

\textbf{Overview of related research.} 
Several interesting search problems originated in the second world war due to the need to design patrol strategies for aircraft aiming to detect ships or submarines in the English channel, see \cite{koopman1980search}. Patrolling a corridor with multi agent teams whose goal is ensuring detection and interception of smart evaders was also investigated in \cite{vincent2004framework} while optimally proven  strategies were provided in \cite{altshuler2008efficient}. A somewhat related, discrete version of the problem, was also investigated in \cite{altshuler2011multi}. It focuses on a dynamic variant of the cooperative cleaners problem, a problem that requires several simple agents to a clean a connected region on a grid with contaminated pixels. This contamination is assumed to spread to neighbors at a given rate.

In \cite{bressan2008blocking,bressan2012optimal}, Bressan et al. investigate optimal strategies for the construction of barriers in real-time aiming at containing and confining the spread of fire from a given initial area of the plane. The authors are interested in determining a minimal possible barrier construction speed that enables the confinement of the fire, and on determining optimality conditions for confinement strategies.

A non-escape search procedure for evaders that are originally located in a convex region of the plane from which they may move out of, is investigated in \cite{tang2006non}, and a cooperative progressing spiral-in algorithm performed by several agents with disk shaped sensors in a leader-follower formation is proposed. In \cite{mcgee2006guaranteed}, McGee et al. investigate guaranteed search patterns for smart evaders that do not have any maneuverability restrictions except for an upper limit on their speed. The sensor the agents are equipped with detects evaders within a disk shaped area around the searcher's location. Search patterns consisting of spiral and linear sections are considered. In \cite{hew2015linear}, Hew investigates search for smart evaders by implementing concentric arc trajectories with agents having disk-shaped sensors similar to the ones used in \cite{mcgee2006guaranteed}.  The aim of search protocol is to detect submarines in a channel or in a half plane.

Another set of related problems are pursuit-evasion games, where the pursuers' objective is to detect evaders and the evaders objective is to avoid the pursuers. Pursuit-evasion games include combinations of single and multiple evaders and pursuers scenarios. These types of problems are addressed in the context of perimeter defense games by Shishika et al. in \cite{shishika2018local,shishika2020cooperative}, with a focus on utilizing cooperation between pursuers to improve the defense strategy. In \cite{shishika2018local}, implicit cooperation between pairs of defenders that move in a ``pincer movement" is performed to intercept intruders before they enter a planar convex region. In \cite{makkapati2019optimal}, pursuit–evasion problems involving multiple pursuers and multiple evaders (MPME) are studied. The original MPME problem is decomposed to a sequence of simpler multiple pursuers single evader (MPSE) problems by classifying if a pursuer is relevant or redundant for each evader and only the relevant pursuers participate in the MPSE pursuit of each evader. Pursuers and evaders are all assumed to be identical, and pursuers follow either a constant bearing or a pure pursuit strategy. The problem is simplified by adopting a dynamic divide and conquer approach, where at every time instant each evader is assigned to a set of pursuers based on the instantaneous positions of all the players. The original MPME problem is decomposed to a sequence of simpler multiple pursuers single evader (MPSE) problems by classifying if a pursuer is relevant or redundant for each evader by using Apollonius circles. Only the relevant pursuers participate in the MPSE pursuit of each evader.


Recent surveys on pursuit evasion problems are \cite{ chung2011search,kumkov2017zero,weintraub2020introduction}. In \cite{chung2011search}, a taxonomy of search problems is presented. The paper highlights algorithms and results arising from different assumptions on searchers, evaders and environments and discusses potential field applications for these approaches. The authors focus on a number of pursuit-evasion games that are directly connected to robotics and not on differential games which are the focus of the other cited surveys. \cite{kumkov2017zero} presents a survey on pursuit problems with $1$ pursuer versus $2$ evaders or $2$ pursuers versus $1$ evader are formulated as a dynamic game and solved with general methods of zero-sum differential games. In \cite{weintraub2020introduction}, the authors present a recent survey on pursuit-evasion differential games and classify the papers according to the numbers of participating players: single-pursuer single-evader (SPSE), MPSE, one- pursuer multiple-evaders (SPME) and MPME. In \cite{garcia2019optimal}, a two-player differential game in which a pursuer aims to capture an evader before it escapes a circular region is investigated. In \cite{garcia2020multiple}, the problem of border defense differential game where $M$ pursuers cooperate in order to optimally catch $N$ evaders before they reach the border of the region and escape is investigated.

\textbf{Contributions:} We present a comprehensive theoretical and numerical analysis of trajectories, critical speeds and search times for a team of $n$ cooperative sweeping agents equipped with fan-shaped sensors with a variable half-angle, whose mission is to guarantee detection of all smart evaders that are initially located inside a given circular region from which they may move out of to escape the pursuing sweeping agents.
\begin{itemize}
        \item  We present a novel pincer search strategy utilizing pincer sweeps and complementing sensor geometries to improve the detection capabilities of the search team.
        \item We develop analytic formulas for a search protocol, applicable to any even number of sweepers with fan-shaped sensors with a given half-angle. Fan shaped sensors with a variable half-angle that model the actual field-of-view of visual sensors, therefore enabling the applicability of the established results in real-world search scenarios.  
        \item We extend state-of-the-art multi-pursuer multi-evader literature to scenarios with arbitrary large numbers of evaders, where very limited information is available to the pursuers, and even so they are able to optimally cooperate in order to successfully complete their mission.   
\end{itemize}

The research performed in this paper extends and generalizes previous results on linear and circular sensors such as \cite{francos2021search, mcgee2006guaranteed} to fan-shaped sensors with an arbitrary angle. Since “at the limit”, a linear shaped sensor is a special case of a fan shaped sensor with an angle of $0$ and circular or disk-shaped sensors are fan shaped sensors with an angle of $2\pi$.

Hence the analysis performed in this work provides a significant theoretical milestone in generalizing previous results and allowing the application of the established results to practical robotic vision-based search tasks.

\textbf{Numerical Evaluation.} The theoretical analysis is complemented by simulation experiments in MATLAB and NetLogo that verify the theoretical results, highlight the performance of search strategies with different number of sweepers and sensors and illustrates them graphically in the figures embedded throughout the text and in the accompanying video.

\textbf{Paper Organization.} The paper is organized as follows: Section $2$ describes the motivation and setting for using pincer-based search strategies. Section $3$ presents a optimal lower bound on the sweepers' speed that is independent of the specific chosen sweep protocol. In section $4$, an analytical analysis of critical speeds and sweep time is provided, accompanied by numerical results. In the last section conclusions are given and future research directions are discussed.

\section{Pincer-Based Search Problem Formulation}

Inherently, the complete search time of the region depends on the sweeping protocol the sweeping team implements. The protocols we propose aim to reduce the radius of the circle bounding the evader region after each full sweep. This guarantees complete elimination of the potential area where evaders may be located. 

At the beginning of the search, we assume the entire area of the sweepers' sensors is inside the evader region. This implies that the full length of the central line of the sweepers' sensors (see the light green lines that depicts this part of the sensor in Fig. $1$) is inside the evader region, i.e., a footprint of length $2r$. The sweepers' sensors are shown in green in Fig. $1$ as well. The blue circle on the light green line depicts the center of the sensor of clockwise sweeping agent (and therefore its center of mass). $\alpha$ denotes the half-angle of the sweepers fan-shaped sensors.

If we were to distribute sweepers equally along the boundary of the initial evader region, and have them move in the same direction, potential escape from the points adjacent to the starting locations of the sweepers might occur. To enable sweepers to succeed in the task while having the lowest possible critical speed, we propose that pairs of sweepers move out in opposite directions along the boundary of the evader region and sweep in a pincer movement instead of implementing a protocol where all sweepers move in the same direction along the boundary.

The proposed strategy can be applied with any even number of sweepers. Each sweeper is responsible for an angular sector of the evader region that is proportional to the number of sweepers. The sweepers' field-of-views are initially positioned in pairs back-to-back. One sweeper in the pair moves counter-clockwise while the other sweeper in the pair moves clockwise. Once the sweepers meet, i.e. their sensors are again superimposed at a meeting point, they switch the directions in which they move. Changing of directions takes place every time sweepers ``bump" into each other. 


It is worth emphasizing that once a sweeper leaves a location that was cleared from evaders, other evaders may attempt to reach this location again. Therefore, the proposed sweep protocol must ensure that there is no evaders strategy that enables any evader to escape even if evaders wait at the edge of a cleared location and start their escape instantly after a sweeper leaves this location.

Sweepers implementing pincer movements solve the problem of evader region's spread from the "most dangerous points". Evaders located at these points have the maximum time to escape throughout the sweepers' movements. Consequently, if evaders attempting to escape from these locations are detected, evaders attempting to escape from all other locations are detected as well. When a pair of sweepers travelling in a pincer movement finishes sweeping its allocated section of the environment (particularly an angle of $\frac{2 \pi}{n} - \gamma$, which is explained in detail in the next section), provided they move at a speed exceeding the critical speed, the spread of the evader region that originates from these points has to be less than $2r$. Therefore, the sweepers advance towards the center of the evader region by the margin between the spread of the evader region during this motion and $2r$.


Since a pair of sweepers begin their sweep when the footprints of their sensors are “back-to-back”, the evader region’s points that should be considered to guarantee detection of all evaders are located at the inner tips of the central part of the sweepers’ sensors closest to the center of the evader region and not from points on the boundary of the evader region. 

If all sweepers move in the same direction following their equally spaced placement around the region, the evader region’s points that need to be considered for limiting the region’s spread are points located on the boundary of the evader region. This will in turn result in higher critical speeds for sweepers implementing same-direction sweeps. Requiring higher critical speeds also results in longer sweep times for same-direction protocols compared to pincer based methods. 

In \cite{francos2021pincer}, a quantitative analysis comparing critical speeds and sweep times between pincer-based search protocols and same-direction protocols operated by the same teams of sweepers is investigated for agents with different sensors and sweep protocols then the ones considered in this work. The analysis in \cite{francos2021pincer} indicates that the critical speeds and sweep times for agents employing same direction sweeps is indeed higher compared to the pincer-based speeds. We assume this applies in this settings as well and plan to analytically compare these search methodologies in future work.


\begin{figure}[ht]
\noindent \centering{}\includegraphics[width=2.1in,height =1.7808in]{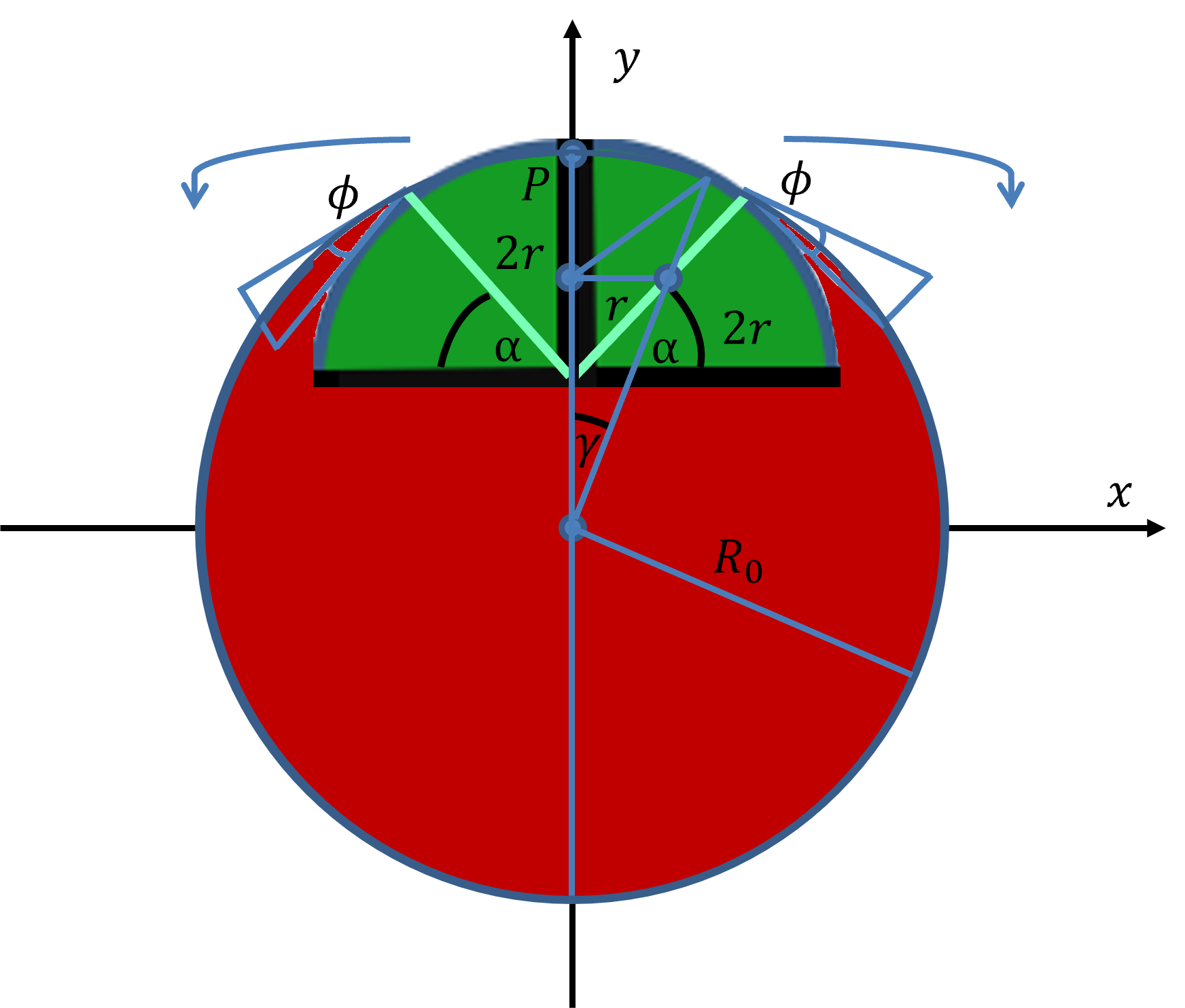} \caption{Initial placement of $2$ sweepers employing the spiral pincer sweep strategy.}
\label{Fig1Label}
\end{figure}

The proposed search pattern uses spiral scans, related to the sweep pattern suggested in \cite{mcgee2006guaranteed}. In order to have maximal sensor footprint intersecting the evader region throughout the sweep, the proposed search pattern aims to track the expanding evader region's ``wavefront".

Simulations demonstrating the evolution of the search strategies are generated using NetLogo software \cite{tisue2004netlogo} and shown in Fig. $2$. Green areas are locations that are free from evaders and red areas indicate locations where potential evaders may still be located. Fig. $2$ shows the cleaning progress of the evader region when $4$ sweepers employ the proposed sweep protocol. 

\begin{figure}[ht]
\noindent \centering{}\includegraphics[width=2.28in,height =2.3in]{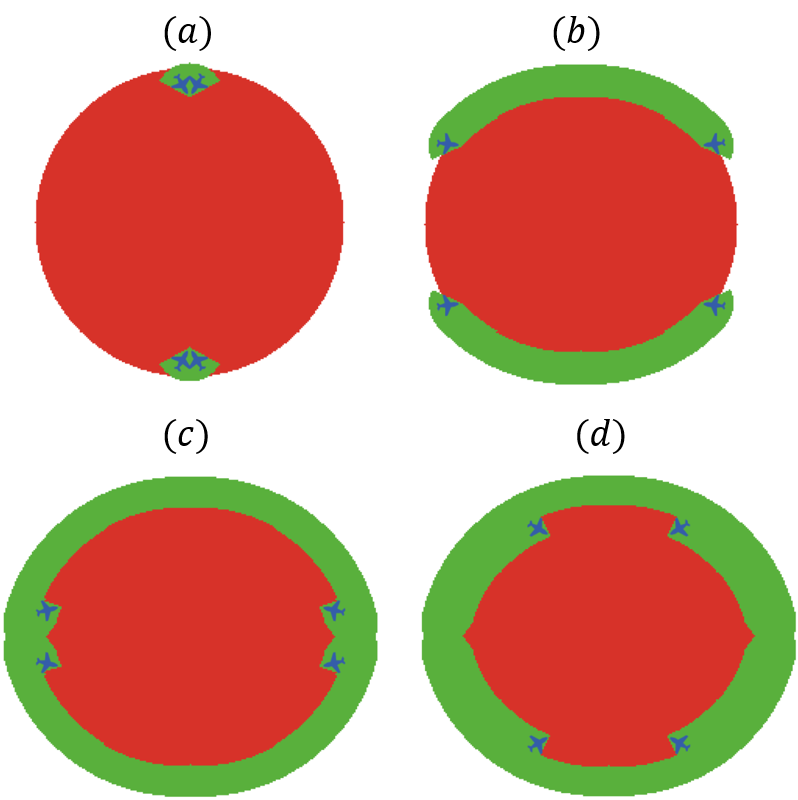} \caption{Swept areas and evader region status for different times in a scenario where $4$ agents employ the spiral pincer sweep process and $\alpha=30^\circ$. (a)-shows the status at beginning of first cycle, (b)- shows the status midway through the first cycle, (c)-shows the status at beginning of the second cycle and (d) shows the status toward the end of the second cycle.}
\label{Fig2Label}
\end{figure}

\textbf{Comparison to related research.} In \cite{francos2019search}, the confinement and complete detection tasks for a line formation of agents or alternatively for a single agent with a linear sensor are analyzed. The presented approach ensures detection of all evaders, however the complete detection time and the critical speeds required for the sweeping formation to succeed in its task are inferior to the results achieved in this work, since no pincer sweeps are performed and since the line formation performs simple circular motion and does not track the expanding wavefront of potential evader locations as is performed in this work.  In \cite{francos2021search}, teams of agents perform pincer sweep search strategies with linear sensors. These two early works assume linear sensors and not fan-shaped sensors as the ones assumed in this paper, hence this paper can be regarded as a generalization of the results from \cite{francos2021search} to fan-shaped sensors with a given half-angle that model the field-of-view of actual cameras allowing applicability of the results to real robotic search and surveillance missions. 

While automated discovery of search policies relies on knowledge of the players’ locations throughout the search process (as assumed in \cite{shishika2018local,shishika2020cooperative}), to plan the trajectories of pursuers and evaders, in our setting we bypass the need for such extensive and often unrealistic knowledge of the environment and perfect communication exchange between members of the same team with a simple and efficient strategy. Furthermore, in contrast to \cite{shishika2018local,shishika2020cooperative} we do not assume a finite number of evaders and that each sweeper can intercept only a single evader, and guarantee interception of all evaders.

As opposed to our work, references \cite{mcgee2006guaranteed, hew2015linear} use a disk shaped sensor with a radius of $r$, and do not calculate the detection times of all evaders. Although the works in references \cite{shishika2018local,shishika2020cooperative} use pincer movements between pairs of defenders as well, they have a different objective of protecting an initial region from invaders, contrary to our goal which is to detect all evaders that may spread from the interior of the region. Furthermore these works are concerned with devising policies for intercepting as many intruders as possible relying on an assumption that the number of intruders is finite and that each defender can only intercept a single intruder and not on guaranteeing interception of all evaders or intruders, as is the aim of our work.

Contrary to the work reported in references \cite{garcia2019optimal, garcia2020multiple}, we do not assume the number of evaders is known to the pursuers and not that the members of each party have full access to the locations of the members of the opposing party and use it in order to plan their actions. From our point of view using such information is unrealistic for search applications where determining the locations of the opposing team members is at the heart of the problem. Such knowledge requires very sophisticated sensing capabilities when searching large regions. Furthermore, we solve the assignment of pursuers to evaders elegantly by assigning sweeper pairs based on their geometric location. This simple assignment rule alleviates the need for communicating synchronized and precise location information between the searchers.

\section{Lower Bound on the Speed of Sweepers with Fan-Shaped Sensors}

\noindent We present an optimal bound on the speed of sweepers equipped with fan shaped sensors. The developed bound is irrespective of the choice of the implemented sweeping protocol. This bound serves as one of the benchmarks that are used to compare the performance of different search strategies by a metric we refer to as the critical speed.  

The largest number of evaders are detected when the entire fan shaped sensor overlaps the evader region. For a fan shaped sensor of length $2r$ and half angle $\alpha$ the maximal rate of detecting evaders must be higher than the minimal expansion rate of the evader region. Else, there is no feasible sweeping protocol that can ensure the detection of all evaders. 

The lower bound is derived for a sweeper team containing $n$ identical sweepers. The smallest sweeper's speed that ensures the maximal detection rate is larger than the minimal expansion rate is based purely on geometric properties of the evader region, the sweepers sensors' dimensions and the evaders' maximal speed.  This lower bound on the speed is defined as the critical speed and is denoted by $V_{LB}$.

\begin{equation}
V_{LB} = \frac{{\pi {R_0}{V_T}}}{nr}
\label{e1}
\end{equation}

This lower bound is similar for line sensors and its proof follows the derivation such a lower bound for a sweeper's critical speed in \cite{francos2021search}. Similar results for the bound in \cite{francos2021search} for sweepers with circular sensors are given in \cite{mcgee2006guaranteed}.

If sweepers move at speeds that exceed the critical speed, they possess the ability to implement a suitable sweep protocol that decreases the evader region, therefore allowing them to sweep the next iteration around a smaller region. Therefore, the critical speed is calculated based on the sweep around the initial and largest radius, since sweeping around smaller regions takes less time. Therefore, if the sweepers succeeded in confining evaders to the larger and initial evader region they will surely succeed when the evader region decreases throughout the search protocol (since their speed stays constant throughout the protocol).

\section{Spiral Pincer Sweep Strategies For Sweepers with Fan-Shaped Sensors}

\subsection{The Critical Speed}
In order for the sweeper team to efficiently scan the region, we desire that there will be maximal intersection between the sweepers' sensors and the evader region. Implementing spiral trajectories, in which sweepers' sensors track the expanding evader region's wavefront, allows the sweeping team to achieve nearly optimal efficiency as proven in this section. 

We consider a scenario in which a sweeping team consisting of $n$ sweepers, where $n$ is even. All sweepers have fan-shaped sensors with finite visibility that model the sensor geometry of real cameras. Sweepers' sensors have a diameter length of $2r$ and a half-angle of $\alpha$, as shown in Fig. $1$. Selecting sweepers with complementing sensor geometries along with having the sweepers move in opposite directions prevents evaders from devising an escape strategy that leverages the gap between the sweepers' sensors, since the field-of-views of each sweeping pair are tangent to each other. These considerations highlight the benefits of using both pincer sweeps along with sensor geometries between sweeping pairs that complete each other in order to achieve maximal evader detection performance.

Fig. $1$ shows the initial setting of the problem when $2$ sweepers perform the search. The symmetry between the trajectories of adjacent searching pairs prevents potential escape of evaders from point $P=(0,R_0)$,  "the most dangerous point`` evaders may attempt to escape from due to similar consideration as are proven in \cite{francos2019search}.

Hence, a sweeper's critical speed depends solely on the time required for it to sweep its allocated angular sector. Define by $\gamma$ the angle describing the offset angle of the center of the sweeper from the center of the region at the beginning of the sweep process, see Fig. $1$ for a geometric illustration of $\gamma$. The relation between $\gamma$ and $\alpha$ obtained by using the law of cosines on the depicted triangles of Fig. $1$ and is given by,
\begin{equation}
4{r^2} = {R_0}^2 + {\left( {{R_0} - 2r + r\cos \alpha } \right)^2} - 2{R_0}\left( {{R_0} - 2r + r\cos \alpha } \right)\cos \gamma 
\label{e45}    
\end{equation}
Rearranging terms yields,
\begin{equation}
\begin{array}{l}
\gamma  = \arccos{ \left( {\frac{{2{R_0}^2 + 2{R_0}r\left( {\cos \alpha  - 2} \right) + {r^2}\cos \alpha \left( {\cos \alpha  - 2} \right)}}{{2{R_0}\left( {{R_0} + r\cos \alpha  - 2r} \right)}}} \right)}
\label{e300}
\end{array}
\end{equation}
Therefore, at each sweep agents traverse an angle of $\frac{2\pi}{n} - \gamma$.  After sweepers complete searching their allocated sector, and only if they move at a sufficient speed, they advance toward the center of the evader region together. If the search is planar, sweepers switch the sweeping directions following an inward movement toward the center. If the search is $3$ dimensional, sweepers first move together toward the center of the region and only afterwards exchange the sector they sweep with the sweeping teammate they perform the pincer sweep with. Following this motion they commence sweeping an evader region bounded by a circle with a smaller radius.

Sweepers begin their spiral motion with the outer tip of the central line of their fan-shaped sensor tangent to the evader region's boundary. To preserve the outer central tip of their sensor tangent to the evader region, sweepers move with an angular offset of $\phi$ to the normal of the evader region (at each point) throughout their sweep. $\phi$ is the angle between the outer tip of the central line of a sweeper's sensor and the normal of the evader region at the meeting point between the evader region and the tip of the central line of the sensor that is furthest from the center of the evader region (see the light green line that depicts this part of the sensor in Fig. $1$). $\phi$ depends on the ratio between the sweeper and evader speeds (see Fig. $1$ for its depiction). The sweepers' incentive to move at a constant angle $\phi$ to the normal of the evader region is to preserve the evader region's circular shape and to keep as much of their sensor footprint inside the evader region at all times in order to detect a maximal number of evaders. $\phi$ is given by,
\begin{equation}
\sin \phi  = \frac{{{V_T}}}{{{V_s}}}
\label{e1000}
\end{equation}

Because spiral sweeping preserves the evader region's circular shape, as a result of the isoperimeteric inequality such trajectories necessitate that the length of the curve bounding the evader region is minimal and therefore the time required for the sweepers to sweep around it is minimal as well. Denote by $\theta _s$ the sweeper's angular speed, given by,
\begin{equation}
\frac{{d{\theta _s}}}{{dt}} = \frac{{{V_s}\cos \phi }}{{{R_s}(t)}} = \frac{{\sqrt {{V_s}^2 - {V_T}^2} }}{{{R_s}(t)}}
\label{e1002}
\end{equation}
Integration of (\ref{e1002}) with the initial and final sweep times of the angular sector as the integral's limits is,
\begin{equation}
\int_0^{{t_\theta }} {\dot \theta \left( \zeta  \right)} d\zeta  = \int_0^{{t_\theta }} {\frac{{\sqrt {{V_s}^2 - {V_T}^2} }}{{{V_T}\zeta  + {R_0} - r}}d} \zeta 
\label{e36}
\end{equation}
With a solution for $\theta \left( {{t_\theta }} \right)$ obtained from (\ref{e36}) given by,
\begin{equation}
\theta \left( {{t_\theta }} \right) = \frac{{\sqrt {{V_s}^2 - {V_T}^2} }}{{{V_T}}}\ln \left( {\frac{{{V_T}{t_\theta } + {R_0} - r}}{{{R_0} - r}}} \right)  
\label{e15}
\end{equation}
Raising by an exponent (\ref{e15}) yields,
\begin{equation}
\left( {{R_0} - r} \right){\exp\left({\frac{{{V_T}{\theta({t_\theta }) }}}{{\sqrt {{V_s}^2 - {V_T}^2} }}}\right)} = {V_T}{t_\theta } + {R_0} - r = {R_s}({t_\theta })
\label{e38}
\end{equation}

The time required for a sweeper to scan its allocated angular section corresponds to it sweeping an angle of $\theta$ by $\frac{2\pi}{n}- \gamma$ around the region. While the sweeper performs this motion, the evader region's expansion must be at most $2r$ from its initial radius.  Otherwise, the sweepers will not be to detect potential evaders attempting to escape from the region. Allowing a spread of at most $2r$ is a simplification of the problem that assumes that while the sweepers progress toward the center of the evader region the evaders do not spread. Obviously this is not a realistic assumption, hence we address this case after the solution of the simplified setting. To guarantee that no potential evader escapes the sweepers, following a sweep by $\frac{2\pi}{n}-\gamma$ we must demand that,
\begin{equation}
{R_0} + r \ge {R_s}({t_{\frac{2\pi}{n}-\gamma}})
\label{e39}
\end{equation}
Define,
\begin{equation}
\lambda  \buildrel \Delta \over = \exp\left({\frac{{(\frac{2\pi}{n}-\gamma) {V_T}}}{{\sqrt {{V_s}^2 - {V_T}^2} }}}\right)
\label{e40}    
\end{equation}
Replacing ${R_s}({t_{\frac{2\pi}{n}-\gamma}})$ with the expression derived for the trajectory of the sweeper's center results in, ${R_0} + r \ge \left( {{R_0} - r} \right)\lambda$. Hence, to guarantee no potential evader escapes without being detected by the sweeper team, it is mandatory for the sweepers' speed to exceed,
\begin{equation}
{V_S} \ge {V_T}\sqrt {\frac{{{{\left( {\frac{{2\pi }}{n}-\gamma} \right)}^2}}}{{{{\left( {\ln \left( {\frac{{{R_0} + r}}{{{R_0} - r}}} \right)} \right)}^2}}} + 1}
\label{e43}
\end{equation}

As mentioned earlier, in order to accommodate the expansion of evaders during the inward motion of sweepers a modification to the critical speed in (\ref{e43}) must be made. This change requires that when sweepers move inwards after they finish their sweep, they must meet the evader region's expanding wavefront that moves outwards from every point in the evader region with a speed of $V_T$ at the previous radius $R_0$. Incorporating this constraint into the solution of the critical speed guarantees that no evaders escape during the sweepers inwards motion as well. Denote by ${T_c}$ the evader region's expansion throughout the first sweep. In order for no evader to escape detection the following inequality must hold,  ${V_T}{T_c} \leq \frac{{2r{V_s}}}{{{V_s} + {V_T}}}$. Replacing ${T_c}$ with $\frac{\left( {{R_0} - r} \right)\left( {\lambda - 1} \right)}{V_T}$ results in,

\begin{equation}
\left( {{R_0} - r} \right)\left( {\lambda - 1} \right) = \frac{{2r{V_s}}}{{{V_s} + {V_T}}}
\label{e49}
\end{equation}
\newtheorem{thm}{Theorem}
\begin{thm}
For a spiral pincer sweep protocol with $n$ sweepers in which $n$ is even, the critical speed , ${V_s}$, that allows the sweeping team to confine all evaders to their original domain is computed from,
\begin{equation}
{V_s} = {V_T}\frac{1}{{1 - \frac{{2r}}{{\left( {{R_0} - r} \right)\left( {\lambda  - 1} \right)}}}}
\label{e98996}
\end{equation}
\end{thm}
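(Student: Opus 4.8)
The plan is to treat the critical speed as the smallest $V_s$ for which the confinement requirement can be met with equality, so that \eqref{e98996} emerges as a purely algebraic rearrangement of the wavefront-matching identity \eqref{e49}. First I would assemble the geometric chain that yields \eqref{e49}. Keeping the outer tip of the central sensor line tangent to the expanding boundary forces the sensor center onto the spiral $R_s(t) = V_T t + R_0 - r$ of \eqref{e38}; integrating the angular-rate relation \eqref{e1002} across one sector of angular width $\frac{2\pi}{n} - \gamma$ gives $R_s = (R_0 - r)\lambda$ with $\lambda$ as defined in \eqref{e40}. Hence the radial growth of the evader region over a single sweep equals $V_T T_c = (R_0 - r)(\lambda - 1)$, which is precisely the substitution for $T_c$ used above.

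Next I would justify the inequality that prevents any evader from leaking out while the pair repositions inward. After finishing a sweep the two sweepers move toward the center to re-engage a wavefront that is simultaneously advancing outward at speed $V_T$, so a sweeper and the wavefront close on each other at combined speed $V_s + V_T$. A gap of one sensor length $2r$ is therefore closed in time $\frac{2r}{V_s + V_T}$, during which the pair advances inward by $V_s \cdot \frac{2r}{V_s + V_T} = \frac{2rV_s}{V_s + V_T}$; this is the spread the sweepers can reclaim. Confinement thus demands $V_T T_c \le \frac{2rV_s}{V_s + V_T}$, and the minimal admissible (critical) speed is attained at equality, which, after inserting $V_T T_c = (R_0 - r)(\lambda - 1)$, is exactly \eqref{e49}.

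The remaining step is to solve \eqref{e49} for $V_s$. Abbreviating $(R_0 - r)(\lambda - 1)$ by $A$, equation \eqref{e49} reads $A = \frac{2rV_s}{V_s + V_T}$; cross-multiplying and collecting the two terms proportional to $V_s$ isolates $V_s$ as a single rational expression in $A$ and $V_T$, and dividing numerator and denominator through by $A$ casts it in the form displayed in \eqref{e98996}. I expect this final manipulation to be entirely routine. The genuine content, and the step deserving the most care, is the physical justification of the inward-motion constraint---in particular the factor $\frac{V_s}{V_s + V_T}$ that encodes the relative geometry of a sweeper and the outward-moving wavefront. A secondary subtlety worth flagging is that \eqref{e98996} is only implicit: since $\lambda$ itself depends on $V_s$ through \eqref{e40}, the identity pins down the critical speed as the root of a transcendental fixed-point equation, to be evaluated numerically once $R_0$, $r$, $n$, $\gamma$ and $V_T$ are prescribed.
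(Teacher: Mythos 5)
Your route is the same one the paper takes: it, too, obtains the critical speed by imposing the inward-motion constraint ${V_T}{T_c} \leq \frac{2rV_s}{V_s+V_T}$ at equality, substituting $V_T T_c = (R_0-r)(\lambda-1)$ to arrive at \eqref{e49}, and then presenting the theorem as the rearrangement of that identity; your reconstruction of the spiral $R_s(t)=V_T t + R_0 - r$, of $\lambda$ via \eqref{e40}, and of the closing-speed factor $\frac{V_s}{V_s+V_T}$ all match the text. The one step you should not have waved through is the ``entirely routine'' final rearrangement, because it does not actually produce \eqref{e98996} as printed. Writing $A=(R_0-r)(\lambda-1)$, cross-multiplying \eqref{e49} gives $AV_T=(2r-A)V_s$, i.e.\ $V_s = V_T\frac{1}{\frac{2r}{A}-1}$, whereas \eqref{e98996} asserts $V_s = V_T\frac{1}{1-\frac{2r}{A}}$; these differ by a sign. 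Since $A=\frac{2rV_s}{V_s+V_T}<2r$, the denominator $1-\frac{2r}{A}$ in \eqref{e98996} is negative, and substituting \eqref{e49} into \eqref{e98996} collapses to $V_s=-V_s$, so the two displayed equations are mutually inconsistent for any $V_s>0$; one of them carries a sign typo. By deferring the algebra you silently endorse the printed formula instead of catching this. Your closing observation that the relation only pins down $V_s$ implicitly, since $\lambda$ depends on $V_s$ through \eqref{e40} and must be resolved numerically, is correct and is exactly the paper's Newton--Raphson remark.
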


The solution for the critical speed is calculated numerically by applying the Newton–Raphson method. The simplified critical speed of (\ref{e43}) serves as an initial guess. In the rest of the article the critical speed considered is that of theorem $3$, thus it accounts for the expansion of the evader region during the sweepers advancements into the center of the region. 




\subsection{Analytical Sweep Time Analysis}

\begin{thm}
For a team of $n$ sweepers implementing the pincer sweep protocol, the time required to detect all evaders in the region and reduce the evader region's area to $0$ is given by the sum of inward advancement times after all the sweeps,${T_{in}}(n)$, along with the sum of all the spiral traversal times, ${T_{spiral}}(n)$ in all sweeps. Hence,
\begin{equation}
T(n) = {T_{in}}(n) + {T_{spiral}}(n)
\label{e98993}
\end{equation}
\end{thm}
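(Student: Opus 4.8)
The plan is to treat this statement as a decomposition of the total search time along two orthogonal axes: the \emph{type} of motion (spiral traversal versus inward advancement) and the \emph{cycle index}. The search does not terminate after a single pass: each completed pincer sweep of the angular sector $\frac{2\pi}{n}-\gamma$ shrinks the bounding radius of the evader region, after which the team repeats the procedure around the smaller region. So the first step is to make this cyclic structure precise. I would index the sweeps by $k=0,1,\dots,N-1$ and let $R_k$ denote the radius of the circle bounding the evader region at the start of the $k$-th sweep, with $R_0$ the given initial radius. Within cycle $k$ the team performs exactly two consecutive, temporally disjoint operations: a spiral traversal of angular extent $\frac{2\pi}{n}-\gamma$ around the circle of radius $R_k$, immediately followed by a coordinated inward advancement that re-establishes a smaller bounding circle of radius $R_{k+1}$.

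The second step is to justify that this recursion is well defined and terminates. Using the trajectory relation (\ref{e38}) together with the expansion constraint (\ref{e49}) that pins down the critical speed, I would derive the explicit contraction map $R_{k+1}=f(R_k)$ and show that whenever the team moves at the critical speed of (\ref{e98996}) or faster, one has $R_{k+1}<R_k$ strictly. The monotone decrease, together with the fact that each cycle removes a margin controlled by $2r$ net of the permissible wavefront expansion, guarantees that after finitely many sweeps $N$ the residual radius drops to or below $r$, at which point a final inward advancement collapses the remaining disk and drives the evader-region area to $0$. Establishing that $N$ is finite is precisely what makes the two sums $T_{in}(n)=\sum_{k} t_{in,k}$ and $T_{spiral}(n)=\sum_{k} t_{spiral,k}$ meaningful finite quantities rather than formal series.

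The third step is the per-cycle time accounting. For the spiral phase, inverting (\ref{e15}) (equivalently, reading the time off (\ref{e38})) yields the closed-form duration $t_{spiral,k}$ needed to sweep the angle $\frac{2\pi}{n}-\gamma$ around radius $R_k$; for the inward phase, the wavefront-meeting condition behind (\ref{e49})---each sweeper closing on the outward-expanding boundary at relative speed $V_s+V_T$---yields the duration $t_{in,k}$. Because within a cycle the spiral motion strictly precedes the inward motion, and because consecutive cycles are executed in sequence with no idle or overlapping intervals, the total elapsed time is the ordered sum $T(n)=\sum_{k=0}^{N-1}\bigl(t_{spiral,k}+t_{in,k}\bigr)$. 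Regrouping this finite double sum by phase type---legitimate exactly because $N<\infty$---gives $T(n)=\sum_{k}t_{spiral,k}+\sum_{k}t_{in,k}=T_{spiral}(n)+T_{in}(n)$, which is the claimed identity.

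The main obstacle I anticipate is not the final regrouping, which is bookkeeping once the structure is in place, but rather proving termination rigorously: showing that $R_{k+1}=f(R_k)$ contracts by a genuinely bounded amount each cycle so that $N$ is finite, and pinning down the correct terminal condition for the last cycle, where the remaining disk has radius at most $r$ and a final advancement rather than another full spiral reduces the area to exactly $0$. Care is also needed to verify that the two phases are truly disjoint and that the synchronized inward advancement of all $n$ sweepers occupies a single well-defined interval common to the whole team, so that the global clock is consistently the sum of the per-cycle durations.
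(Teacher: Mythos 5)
Your overall decomposition --- indexing the sweeps, deriving the contraction $R_{k+1}=f(R_k)$ from (\ref{e38}) and (\ref{e1097}), establishing finiteness of the number of sweeps, and then regrouping the finite sum of per-cycle durations by phase type --- is exactly the structure of the paper's argument, and the per-cycle time accounting you describe (spiral duration from inverting (\ref{e15}), inward duration from closing on the wavefront at relative speed $V_s+V_T$) matches equations (\ref{e47}) and (\ref{e1077}). The paper additionally carries this through to closed forms for $\widetilde{T}_{in}(n)$ and $\widetilde{T}_{spiral}(n)$ via the linear difference equations (\ref{e345}) and (\ref{e37}), but that is computation on top of the same skeleton.

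The genuine gap is in your terminal phase. You assert that once the residual radius drops low enough, ``a final inward advancement collapses the remaining disk and drives the evader-region area to $0$.'' That is not sufficient: an inward translation of the sensors does not by itself clear the residual disk, since evaders near the center can maneuver around the stationary footprints. The paper's end-game instead has the sweepers place the inner tips of their sensors at the center and then execute a final \emph{circular} sweep of angle $\frac{2\pi}{n}-2\alpha$ around radius $r$ (equation (\ref{e619})), and --- crucially --- this circular sweep is less efficient than a spiral one, so it only guarantees detection if the speed satisfies inequality (\ref{e703}), equivalently (\ref{e709}). When that inequality fails, an additional spiral sweep of duration $T_l(n)$ in (\ref{e715}) must be inserted, which is why the final expressions (\ref{e719}) and (\ref{e717}) carry the indicator $\eta$ and the extra inward term $T_{in_f}(n)$. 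Your plan as written would either leave a potentially undetected neighborhood of the center or silently assume the speed condition (\ref{e709}) always holds; also note the paper's termination threshold for the recursion is $\widehat{R}_N=2r$, not $r$. Once the end-game terms are folded into the definitions of $T_{spiral}(n)$ and $T_{in}(n)$ as the paper does, your regrouping argument goes through unchanged.
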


\begin{proof}

Denote by $\Delta V > 0$ the excess speed of a sweeper above the critical speed. The sweeper's speed hence is $V_s = V_c + \Delta V$. Let $\theta \left( {{t_\theta }} \right)$ denote the angle of a sweeper with respect to the center of evader region. At the start of each sweep the center of a sweeper's sensor is located at a distance of ${R_i} - r$ from the center of the evader region. $\theta \left( {{t_\theta }} \right)$ is calculated in (\ref{e15}). Replacing $R_0$ with $R_i$ yields,
\begin{equation}
\theta \left( {{t_\theta }} \right) = \frac{{\sqrt {{V_s}^2 - {V_T}^2} }}{{{V_T}}}\ln \left( {\frac{{{V_T}{t_\theta } + {R_i} - r}}{{{R_i} - r}}} \right)
\label{e46}
\end{equation}
Denote by ${T_{spiral}}_i$ the time required for a sweeper to sweep an angle of $\theta \left( {{t_\theta }} \right)=\frac{2\pi}{n} -\gamma_i$. This time can be calculated by rearranging terms in (\ref{e46}) and equals, 
\begin{equation}
{T_{spiral}}_i = \frac{{\left( {{R_i} - r} \right)\left( {\lambda - 1} \right)}}{{{V_T}}}
\label{e47}
\end{equation}
In case sweepers move at speeds exceeding the critical speed appropriate for the scenario, denote the distance every sweeper advances toward the center of the evader region by ${\delta _i}(\Delta V)$. Following this inward motion, the evader region decreases and is contained inside a smaller circular evader region having a radius of ${R_{i + 1}} = {R_i} - {\delta _i}(\Delta V)$. Hence ${\delta _i}(\Delta V)$ equals,
\begin{equation}
{\delta _i}(\Delta V) = 2r - {V_T}{T_{spiral}}_i \hspace{1mm}, \hspace{1mm} 0  \le {\delta _i}(\Delta V) \le 2r
\label{e1015}
\end{equation}
The number of sweepers, the half-angle of their sensors and the sweep cycle number (the number representing the number of sweeps already completed by the sweeping team around the evader region) all influence the distance sweepers are able to progress inwards toward the center of the evader region after completing a sweep. If the evader region was not expanding throughout the sweepers' inward motion, then  ${\delta _i}(\Delta V)$ is,
\begin{equation}
{\delta _i}(\Delta V) = 2r - \left( {{R_i} - r} \right)\left( {\lambda - 1} \right)
\label{e1067}
\end{equation}
In the expression ${\delta _i}(\Delta V)$, $\Delta V$ expresses the sweeper's excess speed above the critical speed. Denote by $i$ the sweep cycle number that starts from sweep number $0$. As mentioned earlier in the development of the critical speed, the time required for sweepers to advance toward the center of the evader region up to the point in which their sensors fully overlap the evader region depends on the relative speed between the sweepers inwards entry speed and the evader region outwards expansion speed. Hence, after sweepers finish a sweep they advance toward the center of evader region by a distance of,
\begin{equation}
{\delta _{{i_{eff}}}}(\Delta V) = {\delta _i}(\Delta V)\left( {\frac{{{V_s}}}{{{V_s} + {V_T}}}} \right)
\label{e700}
\end{equation}

Following the completion of a sweep around the region, sweeper pairs progress together in the direction of the evader region's center. During inwards motions, sweepers move at speed of $V_s$, up to the point in which they begin to spirally sweep again once their sensors are fully over the evader region’s expanding wavefront. Therefore, their speed is always bounded. We assume as a worst-case assumption, that the sweeper's sensors detect evaders only when sweepers perform spiral motions. Therefore, throughout inward motions no detection of evaders occurs, while the evader region continues to expand due to motions of evaders. In the video accompanying the paper, the time it takes the sweepers to advance inwards is taken into account and dictates the new radius of the evader region after the sweep. Hence, after an inward advancement the evader region is within a circle whose radius is,

\begin{equation}
{R_{i + 1}} = {R_i} - {\delta _i}(\Delta V)\left( {\frac{{{V_s}}}{{{V_s} + {V_T}}}} \right)
\label{e1097}
\end{equation}
Denote by $\widetilde {R}_i=R_i -r$. The usage of $\widetilde {R}_i$ allows to analytically solve for the number of sweeps required to complete the search of the entire evader region. Replacing ${{\delta _i}(\Delta V)}$ into (\ref{e1097}) yields the following difference equation,
\begin{equation}
\widetilde{R}_{i + 1} = \widetilde{R}_i\left( {\frac{{{V_T} + {V_s}\lambda}}{{{V_s} + {V_T}}}} \right) - \frac{{2r{V_s}}}{{{V_s} + {V_T}}}
\label{e1073}
\end{equation}
Denote the coefficients of (\ref{e1073}) by,
${c_1} =  - \frac{{2r{V_s}}}{{{V_s} + {V_T}}}, {c_2} = \frac{{{V_T} + {V_s}\lambda}}{{{V_s} + {V_T}}}$. Hence,
\begin{equation}
\widetilde{R}_{i + 1} = {c_2}\widetilde{R}_i +{c_1}
\label{e345}    
\end{equation}
Denote by ${R_N}$ the radius of the circle bounding the evader region when it is shrunk to be within a circle having a radius smaller or equal to $2r$. Calculating ${R_N}$ is possible only after the number of sweeps around the region, $N_n$, is calculated. Hence, we use $\widehat{R}_N=2r$ as an estimate of ${R_N}$ to allow the calculation of $N_n$. The number of sweeps required to decrease the evader region to be within a circle of radius $\widehat{R}_N=2r$ is,
\begin{equation}
{N_n} = \left\lceil {\frac{{\ln \left( {\frac{{r\left( {3 - \lambda} \right)}}{{{R_0}\left( {1 - \lambda} \right) + r\left( {1 + \lambda} \right)}}} \right)}}{{\ln \left( {\frac{{{V_T} + {V_s}\lambda}}{{{V_s} + {V_T}}}} \right)}}} \right\rceil
\label{e1076}
\end{equation}

The ceiling operator is used since the number of sweeps must be integer. This means that sweepers continue sweep number $N_n$ even if the evader region's radius is reduced to less $2r$ throughout the final spiral sweep. Hence, to facilitate the calculation of $N_n$ we assume the last sweep occurs when the evader region is within a circle of radius $\widehat{R}_N = 2r$. Denote by $T_{i{n_i}}$ the duration of each inwards motion. It is a function of the sweep cycle number and equals,

\begin{equation}
{T_{i{n_i}}} = \frac{{{\delta _{{i_{eff}}}}(\Delta V)}}{{{V_s}}} = \frac{{2r - {\widetilde{R}_i}\left( {\lambda - 1} \right)}}{{{V_s} + {V_T}}}
\label{e1077}
\end{equation}

Denote the sum of all inward advancement times up to the point in which the evader region is within a circle of radius smaller than or equal to $2r$ as $\widetilde{T}_{in}(n)$, i.e. $\widetilde{T}_{in}(n) = \sum\limits_{i = 0}^{{N_n}-2} {{T_{i{n_i}}}}$. During inwards motions the sweepers' sensors are not fully inside the evader region, hence they detect evaders in locations that are already free from evaders. Hence, we assume that sweepers do not detect evaders until they complete their inward motion and begin sweeping again. The total sweep times until the evader region is within a circle having a radius less than or equal to $2r$ is computed as the addition of the total spiral sweep times and the total inward advancement times. Therefore,
\begin{equation}
\widetilde{T}(n) = \widetilde{T}_{in}(n) + \widetilde{T}_{spiral}(n)
\label{e1083}
\end{equation}
The sum of all inward motion times until the evader region is reduced to be within a circle having a radius less than or equal to $2r$ is,
\begin{equation}
\begin{array}{l}
\widetilde{T}_{in}(n) = \sum\limits_{i = 0}^{{N_n} - 2} {{T_{i{n_i}}}}  = \frac{{2r}}{{{V_s} + {V_T}}} + \frac{{{R_0} - r}}{{{V_s}}} + \frac{{2r\left( {{V_T} + {V_s}\lambda} \right)}}{{{V_s}\left( {{V_s} + {V_T}} \right)\left( {1 - \lambda} \right)}}\\
 - \frac{{{{\left( {{V_T} + {V_s}\lambda} \right)}^{{N_n} - 1}}}}{{{V_s}\left( {{V_s} + {V_T}} \right)\left( {1 - \lambda} \right)}}\left( {{R_0}\left( {1 - \lambda} \right) + r\left( {1 + \lambda} \right)} \right)
\end{array}
\label{e702}
\end{equation}

During the final inward motion, sweepers advance toward the region's center and place the inner tips of the central part of their sensors at the center of the evader region. Afterwards, the sweepers perform a final circular sweep and complete the detection of all evaders in the region. The time required to perform this motion is denoted by ${T_{_{in}last}}(n)$ and is given by ${T_{_{in}last}}(n) = \frac{{{R_{{N}}} }}{{{V_s}}}$, resulting in,
\begin{equation}
{R_N} =  - \frac{{2r}}{{1 - \lambda}}
 + {c_2}^{{N_n}}\left( {\frac{{{R_0}\left( {1 - \lambda} \right) + r\left( {1 + \lambda} \right)}}{{1 - \lambda}}} \right)
\label{e782}
\end{equation}
By replacing the exact value of ${R_{{N}}}$ from (\ref{e782}) in ${T_{_{in}last}}(n)$ we obtain,
\begin{equation}
{T_{_{in}last}}(n) = - \frac{{2r}}{{{V_s}\left( {1 - \lambda} \right)}}
 + {c_2}^{{N_n}}\left( {\frac{{{R_0}\left( {1 - \lambda} \right) + r\left( {1 + \lambda} \right)}}{{{V_s}\left( {1 - \lambda} \right)}}} \right)
\label{e727}
\end{equation}
The time required for sweeping around radius $\widetilde{R}_i$ is computed by multiplying $\widetilde{R}_i$ by $\frac{{\lambda - 1}}{V_T}$. Hence, multiplying (\ref{e1073}) by $\frac{{\lambda - 1}}{V_T}$ yields the following sweep times difference equation,
\begin{equation}
{T_{i + 1}} = {c_2}{T_i} + {c_3}
\label{e37}
\end{equation}
Where ${c_3} = \frac{{ - 2r{V_s}\left( {\lambda - 1} \right)}}{{({V_s}+{V_T}){V_T}}}$. The equation for computing the total spiral sweep times up to the point at which the evader region is within a circle of radius less or equal to $2r$ is,
\begin{equation}
\widetilde{T}_{spiral}(n) = \frac{{{T_0} - {c_2}{T_{N_n - 1}} + \left( {{N_n} - 1} \right){c_3}}}{{1 - {c_2}}}
\label{e100}
\end{equation}
Replacing the coefficients into (\ref{e100}) provides,
\begin{equation}
\begin{array}{l}
\widetilde{T}_{spiral}(n) = \frac{{\left( {r - {R_0}} \right)\left( {{V_s} + {V_T}} \right)}}{{{V_T}{V_s}}} - \frac{{2r\left( {{V_T} + {V_s}\lambda} \right)}}{{{V_T}{V_s}\left( {1 - \lambda} \right)}} + \frac{{2r\left( {{N_n} - 1} \right)}}{{{V_T}}} \\ - {\left( {\frac{{{V_T} + {V_s}\lambda}}{{{V_s} + {V_T}}}} \right)^{{N_n}}}\left( {\frac{{\left( {{V_s} + {V_T}} \right)\left( {{R_0}\left( {\lambda - 1} \right) - r\left( {\lambda + 1} \right)} \right)}}{{{V_T}{V_s}\left( {1 - \lambda} \right)}}} \right)
\end{array}
\label{e1086}
\end{equation}

\subsection{The End-game}

Following the sweepers' completion of sweep number ${{N_n} - 1}$, they progress in the direction of the center of the evader region up to a position in which the inner tips of the central parts of their sensors are positioned at the center of the evader region. After this inward motion, the sweepers must complete a set of maneuvers referred to as the end-game in order to ensure detection of all smart evaders. A final circular sweep around radius $r$ is required to complete the detection of all evaders and complete the search mission. Sweepers have the ability to successfully detect all evaders during the last circular sweep only if their speed is sufficiently high to ensure that although they do not track the expanding wavefront of the remaining evader region at this last sweep with a spiral trajectory and perform a considerably less efficient circular sweep, they are still able to catch all evaders. The reason why this last motion cannot be a spiral out motion is because we wish that the tips of the central part of the sweepers sensors will always be positioned at the center of the evader region to ensure that no evaders remain at the center of the region or near its vicinity. Due to the fact that the critical speed for a spiral sweep is lower compared to a critical speed of circular trajectories, the sweepers can only implement the last circular sweep after spiral sweep number ${{N_n} - 1}$ if their speeds are sufficiently high and satisfy the following inequality,
\begin{equation}
2r \ge {V_T}{T_{last}} + {V_T}{T_{_{in}last}} + {R_{N}}
\label{e703}
\end{equation}
Satisfying (\ref{e703}) implies that detection of all evaders is guaranteed. Prior to the last sweep the evader region is within a circle of radius ${R_{N}}$ such that $0 < {R_{N}} \le 2r$. ${R_{{N}}}$ can be also expressed as, ${R_{{N}}} = r\left( {2 - \varepsilon } \right)$. Hence, $\varepsilon$ may be expressed as $\varepsilon  = \frac{{2r - {R_N}}}{r} , \hspace{1mm} 0 \le \varepsilon  < 2$. The final circular sweep takes place once the sweepers progress to the region's center and place the lower tips of the central part of their sensors at the center of the evader region. Hence, the last circular sweep is spans an angle of $\frac{2\pi}{n}-\gamma_i$ around a region contained within a circle of radius $r$ around the center of the evader region. The time required for sweepers to perform this motion is, 
\begin{equation}
{T_{last}}(n) = \left( {\frac{{2\pi }}{n} - 2\alpha } \right)\frac{r}{{{V_s}}}
\label{e619}
\end{equation}
Denote the smallest possible $\varepsilon$ that satisfies (\ref{e703}) as ${\varepsilon _c}$. To perform the last circular sweep immediately following spiral sweep number ${{N_n} - 1}$, the inequality in (\ref{e703}) implies that $\varepsilon  \ge {\varepsilon _c} = \frac{{2{V_T}\left( {\pi  - n\left( {\alpha  - 1} \right)} \right)}}{{n\left( {{V_T} + {V_s}} \right)}}$. This consideration dictates that implementation of a circular sweep immediately after spiral sweep number ${{N_n} - 1}$, $V_s$ is possible only if,
\begin{equation}
{V_s} \ge \frac{{2{V_T}\left( {\pi  - n\left( {\alpha  - 1} \right)} \right) - n\varepsilon_c {V_T}}}{{n\varepsilon_c }}
\label{e709}
\end{equation}
If ${R_N} \ge r\left( {2 - {\varepsilon _c}} \right)$ or equivalently,
\begin{equation}
{R_N} \ge \frac{{2rn\left( {{V_T} + {V_s}} \right) - 2{V_T}r\left( {\pi  - n\left( {\alpha  - 1} \right)} \right)}}{{n\left( {{V_T} + {V_s}} \right)}}
\label{e713}
\end{equation}
Then the sweepers' speed does not suffice to rule out the possibility of a feasible escape trajectory for potential evaders. Hence, if (\ref{e713}) holds or (\ref{e709}) does not, then the sweepers must implement a final spiral sweep, that starts with the lower tips of the central parts of their sensors at the center of the evader region. This spiral sweep starts when the center of each sweeper is at a distance of $r$ from the center of the region. Denote by ${T_l}(n)$ the time required to sweep it. Hence, ${T_l}(n)$ equals ${T_l}(n) = \frac{{r\left( {\lambda - 1} \right)}}{{{V_T}}}$. Denote by $\eta$ a characteristic function that assumes only two possible values, $1$ or $0$. If the additional spiral sweep has to be implemented than $\eta =1$ and consequently ${T_l}(n)$ is added to the total sweep time, otherwise in case no additional spiral sweep is required $\eta =0$. Hence,
\begin{equation}
{T_l}(n) = \frac{{r\left( {\lambda - 1} \right)}}{{{V_T}}}
\label{e715}
\end{equation}

\begin{equation}
{T_{spiral}}(n) = \widetilde{T}_{spiral}(n) + {T_{last}}(n) + \eta {T_l}(n)
\label{e719}
\end{equation}

Denote by ${T_{i{n_f}}}(n)$ the sweepers inward advancement time corresponding to the spread of evaders originating from the center of the evader region at the beginning of the last spiral sweep that had time of ${T_l}(n)$ to spread at a speed of $V_T$. Therefore, ${T_{i{n_f}}}(n)$ is given by ${T_{i{n_f}}}(n) = \frac{{{T_l}(n){V_T}}}{{{V_s}}}$. ${T_{in}}(n)$, the total inward advancement time therefore equals,
\begin{equation}
{T_{in}}(n) = \widetilde{T}_{in}(n) + {T_{_{in}last}}(n) + \eta {T_{i{n_f}}}(n)
\label{e717}
\end{equation}
Or alternatively, 
\begin{equation}
{T_{in}}(n) = \widetilde{T}_{in}(n) + \frac{{{R_{N}}}}{{{V_s}}} + \frac{{\eta r\left( {\lambda - 1} \right)}}{{{V_s}}}
\label{e718}
\end{equation}
\end{proof}

\subsection{Experimental Results}

Fig. $3$ presents a numerical analysis of the total sweep times required to detect all evaders in the region for different even numbers of sweepers ($2$ to $16$ sweepers). The total sweep times are the sum of spiral sweep and inward advancement times. 

\begin{figure}[htb!]
\noindent \centering{} \includegraphics[width=3.4in,height =3.4in]{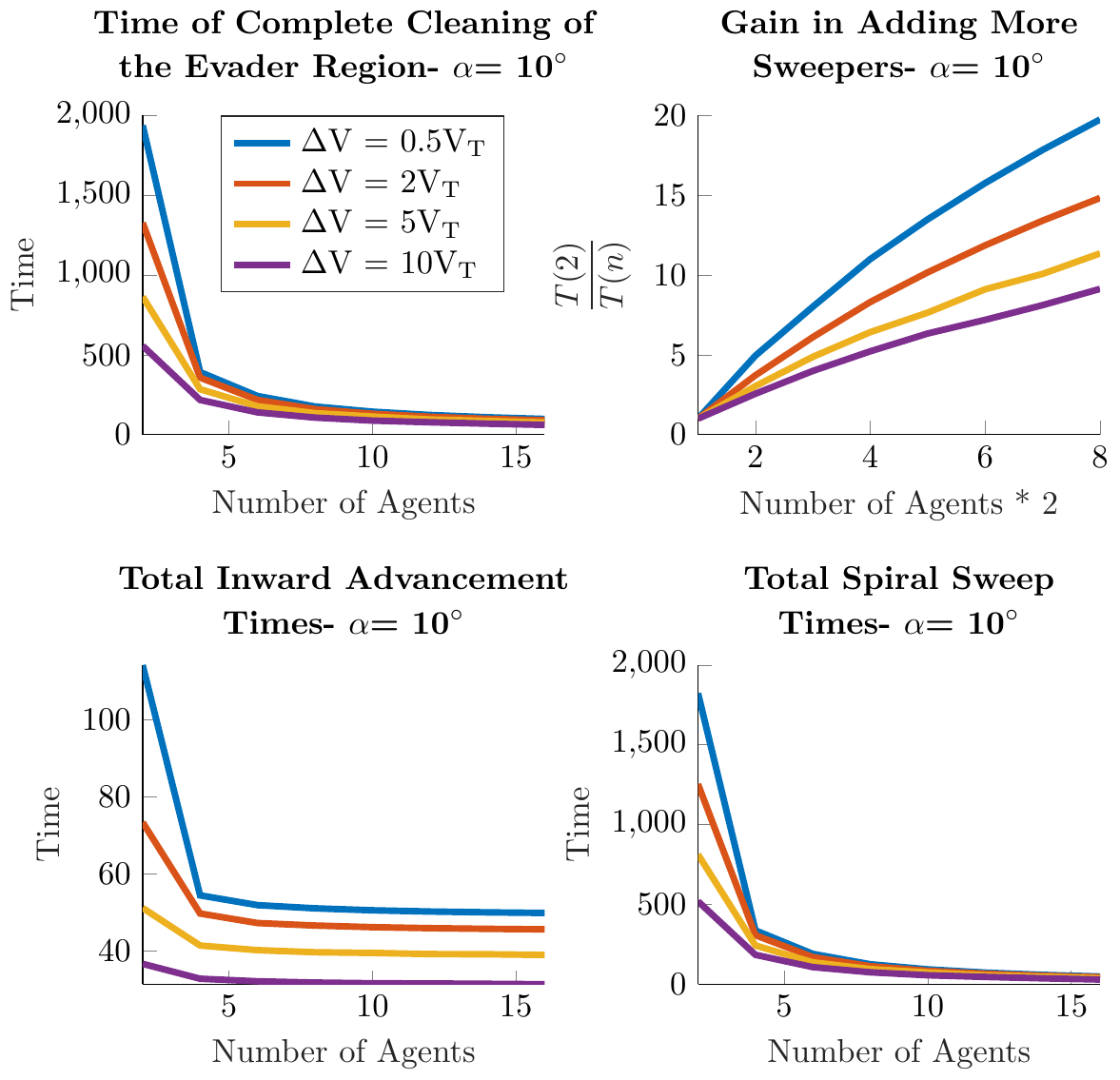} \caption{Sweep time until complete cleaning. The top left plot shows complete cleaning times for different numbers of sweepers. The top right plot shows the gain in search time reduction as a function of the number of sweepers. The bottom left plot shows only the spiral sweep time component and the bottom right plot shows only the inward advancement time component. The chosen values of the parameters are: $\alpha = 10 ^\circ$, $r=100$, $V_T = 1$ and $R_0 = 1000$.}
\label{Fig4Label}
\end{figure}

\section{Conclusions and Future Research Directions}

This research studies the problem of guaranteed detection of smart mobile evaders by a team of sweeping agents equipped with fan-shaped sensors that act as visual detectors. Evaders are initially located inside a known circular environment that does not have physical barriers that prevent evaders from attempting to escape it. An algorithm that guarantees detection of all evaders by using any even number of sweepers that use pincer sweeps between searching pairs is developed and analytically proven. Numerical and illustrative simulations using MATLAB and NetLogo demonstrate the performance of the proposed algorithm.

While in this work we focus on the rather simplistic circular environment, the concept of spiral pincer movements based on pairs of sweepers can be extended and generalized to more complex environments with different geometric layouts. Hence, a future research direction is to generalize the results to environments with different geometries. 



An additional future extension seeks to analyze critical speeds and sweep times obtained for teams of sweepers with fan-shaped sensors that employ same-direction sweeps. We expect such methods to result in degraded performance compared to the pincer-based protocols developed in this work. Such protocols will enable to precisely quantify the performance improvement achieved with the developed pincer-based search protocols compared to their same-direction alternatives. 
\bibliographystyle{IEEEtran}
\bibliography{pincer_same_direction_search_for_smart_evaderscomparison}

\begin{thebibliography}{10}
\providecommand{\url}[1]{#1}
\csname url@samestyle\endcsname
\providecommand{\newblock}{\relax}
\providecommand{\bibinfo}[2]{#2}
\providecommand{\BIBentrySTDinterwordspacing}{\spaceskip=0pt\relax}
\providecommand{\BIBentryALTinterwordstretchfactor}{4}
\providecommand{\BIBentryALTinterwordspacing}{\spaceskip=\fontdimen2\font plus
\BIBentryALTinterwordstretchfactor\fontdimen3\font minus
  \fontdimen4\font\relax}
\providecommand{\BIBforeignlanguage}[2]{{%
\expandafter\ifx\csname l@#1\endcsname\relax
\typeout{** WARNING: IEEEtran.bst: No hyphenation pattern has been}%
\typeout{** loaded for the language `#1'. Using the pattern for}%
\typeout{** the default language instead.}%
\else
\language=\csname l@#1\endcsname
\fi
#2}}
\providecommand{\BIBdecl}{\relax}
\BIBdecl

\bibitem{sandler2018mobilenetv2}
M.~Sandler, A.~Howard, M.~Zhu, A.~Zhmoginov, and L.-C. Chen, ``Mobilenetv2:
  Inverted residuals and linear bottlenecks,'' in \emph{Proceedings of the IEEE
  conference on computer vision and pattern recognition}, 2018, pp. 4510--4520.

\bibitem{bochkovskiy2020yolov4}
A.~Bochkovskiy, C.-Y. Wang, and H.-Y.~M. Liao, ``Yolov4: Optimal speed and
  accuracy of object detection,'' \emph{arXiv preprint arXiv:2004.10934}, 2020.

\bibitem{koopman1980search}
B.~O. Koopman, \emph{Search and screening: general principles with historical
  applications}.\hskip 1em plus 0.5em minus 0.4em\relax Oxford: Pergamon Press,
  1980.

\bibitem{vincent2004framework}
P.~Vincent and I.~Rubin, ``A framework and analysis for cooperative search
  using uav swarms,'' in \emph{Proceedings of the 2004 ACM symposium on Applied
  computing}.\hskip 1em plus 0.5em minus 0.4em\relax ACM, 2004, pp. 79--86.

\bibitem{altshuler2008efficient}
Y.~Altshuler, V.~Yanovsky, I.~A. Wagner, and A.~M. Bruckstein, ``Efficient
  cooperative search of smart targets using uav swarms,'' \emph{Robotica},
  vol.~26, no.~4, pp. 551--557, 2008.

\bibitem{altshuler2011multi}
Y.~Altshuler, V.~Yanovski, I.~A. Wagner, and A.~M. Bruckstein, ``Multi-agent
  cooperative cleaning of expanding domains,'' \emph{The International Journal
  of Robotics Research}, vol.~30, no.~8, pp. 1037--1071, 2011.

\bibitem{bressan2008blocking}
A.~Bressan, M.~Burago, A.~Friend, and J.~Jou, ``Blocking strategies for a fire
  control problem,'' \emph{Analysis and Applications}, vol.~6, no.~03, pp.
  229--246, 2008.

\bibitem{bressan2012optimal}
A.~Bressan and T.~Wang, ``On the optimal strategy for an isotropic blocking
  problem,'' \emph{Calculus of Variations and partial differential equations},
  vol.~45, no. 1-2, pp. 125--145, 2012.

\bibitem{tang2006non}
Z.~Tang and U.~Ozguner, ``On non-escape search for a moving target by multiple
  mobile sensor agents,'' in \emph{2006 American Control Conference}.\hskip 1em
  plus 0.5em minus 0.4em\relax IEEE, 2006, pp. 6--pp.

\bibitem{mcgee2006guaranteed}
T.~G. McGee and J.~K. Hedrick, ``Guaranteed strategies to search for mobile
  evaders in the plane,'' in \emph{2006 American Control Conference}.\hskip 1em
  plus 0.5em minus 0.4em\relax IEEE, 2006, pp. 6--pp.

\bibitem{hew2015linear}
P.~C. Hew, ``Linear and concentric arc patrols against smart evaders,''
  \emph{Military Operations Research}, vol.~20, no.~3, pp. 39--48, 2015.

\bibitem{shishika2018local}
D.~Shishika and V.~Kumar, ``Local-game decomposition for multiplayer
  perimeter-defense problem,'' in \emph{2018 IEEE Conference on Decision and
  Control (CDC)}.\hskip 1em plus 0.5em minus 0.4em\relax IEEE, 2018, pp.
  2093--2100.

\bibitem{shishika2020cooperative}
D.~Shishika, J.~Paulos, and V.~Kumar, ``Cooperative team strategies for
  multi-player perimeter-defense games,'' \emph{IEEE Robotics and Automation
  Letters}, vol.~5, no.~2, pp. 2738--2745, 2020.

\bibitem{makkapati2019optimal}
V.~R. Makkapati and P.~Tsiotras, ``Optimal evading strategies and task
  allocation in multi-player pursuit--evasion problems,'' \emph{Dynamic Games
  and Applications}, vol.~9, no.~4, pp. 1168--1187, 2019.

\bibitem{chung2011search}
T.~H. Chung, G.~A. Hollinger, and V.~Isler, ``Search and pursuit-evasion in
  mobile robotics,'' \emph{Autonomous robots}, vol.~31, no.~4, pp. 299--316,
  2011.

\bibitem{kumkov2017zero}
S.~S. Kumkov, S.~Le~M{\'e}nec, and V.~S. Patsko, ``Zero-sum pursuit-evasion
  differential games with many objects: survey of publications,'' \emph{Dynamic
  games and applications}, vol.~7, no.~4, pp. 609--633, 2017.

\bibitem{weintraub2020introduction}
I.~E. Weintraub, M.~Pachter, and E.~Garcia, ``An introduction to
  pursuit-evasion differential games,'' in \emph{2020 American Control
  Conference (ACC)}.\hskip 1em plus 0.5em minus 0.4em\relax IEEE, 2020, pp.
  1049--1066.

\bibitem{garcia2019optimal}
E.~Garcia, D.~W. Casbeer, and M.~Pachter, ``Optimal strategies of the
  differential game in a circular region,'' \emph{IEEE Control Systems
  Letters}, vol.~4, no.~2, pp. 492--497, 2019.

\bibitem{garcia2020multiple}
E.~Garcia, D.~W. Casbeer, A.~Von~Moll, and M.~Pachter, ``Multiple pursuer
  multiple evader differential games,'' \emph{IEEE Transactions on Automatic
  Control}, vol.~66, no.~5, pp. 2345--2350, 2020.

\bibitem{francos2021search}
R.~M. Francos and A.~M. Bruckstein, ``Search for smart evaders with swarms of
  sweeping agents,'' \emph{IEEE Transactions on Robotics}, vol.~38, no.~2, pp.
  1080--1100, 2021.

\bibitem{francos2021pincer}
R.~Francos and A.~Bruckstein, ``Pincer-based vs. same-direction search
  strategies after smart evaders by swarms of agents,'' \emph{Proceedings of
  the IEEE International Conference on Ubiquitous Robots; arXiv preprint
  arXiv:2104.06940}, 2023.

\bibitem{tisue2004netlogo}
S.~Tisue and U.~Wilensky, ``Netlogo: A simple environment for modeling
  complexity,'' in \emph{International conference on complex systems},
  vol.~21.\hskip 1em plus 0.5em minus 0.4em\relax Boston, MA, 2004, pp. 16--21.

\bibitem{francos2019search}
R.~M. Francos and A.~M. Bruckstein, ``Search for smart evaders with sweeping
  agents,'' \emph{Robotica}, pp. 1--36, 2021.

\end{thebibliography}
\end{document}